\documentclass[letterpaper, 12 pt, onecolumn]{ieeeconf}  

\IEEEoverridecommandlockouts                              
\overrideIEEEmargins

\usepackage{graphics,epsfig,epic,eepic,amssymb,latexsym,amsmath,amssymb,picinpar,epstopdf,times,mathptm}
\usepackage{color}

\usepackage{algorithm}
\usepackage{algpseudocode}
\usepackage{varwidth}
\usepackage{url}
\usepackage[colorinlistoftodos]{todonotes}
\newcommand{\rr}{\mathop{{\rm I}\mskip-4.0mu{\rm R}}\nolimits}
\newcommand{\Z}{\mathop{{\rm Z}\mskip-7.0mu{\rm Z}}\nolimits}

\newcommand{\Uf}{{\cal U}}
\newcommand{\Ef}{{\cal E}}

\newcommand{\Af}{{\cal A}}

\newcommand{\Df}{{\cal D}}
\newcommand{\Xf}{{\cal X}}

\newtheorem{theorem}{Theorem}

\newtheorem{definition}{Definition}

\newtheorem{proposition}{Proposition}

\newtheorem{assumption}{Assumption}

\newtheorem{remark}{\textbf{Remark}}




\title{Networked Constrained Cyber-Physical Systems subject to  malicious attacks: a resilient set-theoretic  control approach}

\author{
	\thanks{*Walter Lucia and Giuseppe Franz\`e are with the
		DIMES,
		Universit\'a degli Studi della Calabria, Via Pietro Bucci, Cubo
		42-C, Rende (CS), 87036, ITALY
		{\tt \small \{walter.lucia, giuseppe.franze\}@unical.it}} 
	Walter Lucia*\qquad
	\thanks{**Bruno Sinopoli is with the  Electrical and Computer Engineering Department, Carnegie Mellon University, Pittsburgh, PA 15213, USA, {\tt \small brunos@ece.cmu.edu}} 
	Bruno Sinopoli**\qquad
	 Giuseppe Franz\`e*  
}

	\date{ }
	
\begin{document}

\maketitle
\thispagestyle{empty}
\pagestyle{empty}



\begin{abstract}
In this paper a novel set-theoretic control framework for Networked Constrained Cyber-Physical Systems
 is presented. 
By  resorting to set-theoretic  ideas and  the physical watermarking concept,  an anomaly detector module and a  control remediation strategy are formally
derived with the aim to contrast severe cyber attacks affecting the communication channels. The resulting scheme ensures Uniformly Ultimate Boundedness  and constraints fulfillment regardless of any admissible 
attack scenario. Simulation results show the effectiveness of the proposed strategy both against  Denial of Service 
and False Data Injection 
attacks.
\end{abstract}

\section{INTRODUCTION}
Cyber-Physical Systems (CPSs) represent the integration of computation, networking, and physical processes that are expected to play a major role in the design and development of future engineering systems equipped with improved  capabilities ranging from autonomy to reliability and cyber security, see \cite{SaAn11} and references therein.   
%
The use of communication infrastructures  and heterogeneous IT components  have certainly improved scalability and functionality features in several applications
(transportation systems, medical technologies, water distributions, smart grids and so on), but on the other hand they have made  such systems highly 
vulnerable to cyber threats,
see e.g. the  attack on the network power transmission \cite{Gordman09} or the  Stuxnet warm  which infects  the  Supervision Control and Data Acquisition
system used to regulate  uranium enrichments  \cite{Chen10}.  
%
%
%
Recently, the analysis of the CPS security  from a theoretic perspective has received increasing attention and different solutions to discover cyber attack occurrences have been proposed, see \cite{MiPaPa13}, \cite{MoWeSi15}, \cite{PaDoBu13}, \cite{WeSi15}  and reference therein for detailed discussions.  
First,  it is important to underline that if  the attacker  and  defender share the same information then a passive anomaly detection system has no chance to identify stealthy attacks   \cite{WeSi15}.  There, the authors propose the introduction of an artificial  time-varying  model correlated to the CPS dynamics  so that any adversary attempting to manipulate the system state is revealed through its effect on such an extraneous time-varying system.  

Along these lines, a relevant approach is provided in \cite{MoWeSi15} where  the physical watermarking concept is exploited. Specifically,  a noisy control signal is 
superimposed  to a given optimal control input in order to authenticate the physical dynamics of the system. In \cite{TeShaSaJo12}, the authors  modify the system structure in order to reveal zero dynamic attacks, while    in \cite{MiPaPa13} a  coding sensor outputs is considered to detect FDI attacks. 
It is worth to point out that most of  works addressing CPSs  focus their attention only on the detection problem leaving out the control countermeasures. To the best of the author's knowledge very few control remediation strategies against cyber attacks have been proposed, see e.g.  \cite{FaTaDi14} where  a first contribution
for dealing with  
 CPS affected by corrupted sensors and actuators has been presented.

In this paper  two classes of cyber attacks will be analyzed:  i)\textit{ partial model knowledge  attacks}  and ii) \textit{full model knowledge  attacks} \cite{TeShaSaJo15}. The former is capable to break  encryption algorithms which  protect the communication channels and to modify the signals sent to the actuators and to the controller with the aim to cause physical damages.  
The second class can inject malicious data within  the control architecture. Zero-dynamics and FDI attacks fall into such a category \cite{TeShaSaJo15}.


In the sequel, the main aim is to develop a control architecture capable to manage constrained CPSs subject to malicious  data attacks. As one of its main merits, the strategy  is able to combine  into a unique framework detection/mitigation tasks with control purposes. In fact both the detection and control phases are addressed by using  the watermarking approach and the   set-theoretic  paradigm
 firstly introduced in \cite{BeRh71} and then successfully   applied in e.g. \cite{Bl08}, \cite{Angeli2008}, \cite{FraTeFa15}. \\
Specifically, the  identification module can be viewed as an active detector that, differently from the  existing solutions, does not require  neither input  or model manipulations. Moreover,  
a watermarking like behavior can be simply obtained
 during the on-line computation of the control action. 
The attack mitigation is  achieved by exploiting the concept of one-step controllable set jointly with 
cyber actions (communication disconnection, channels re-encryption) 
in order to ensure guaranteed control actions under any admissible attack scenario.\\
%
Finally, a  simulations campaign is provided under several attack scenarios to prove the effectiveness of the proposed methodology.

\section{PRELIMINARIES AND NOTATIONS}

Let us consider the class of  
Networked Constrained Cyber-Physical System (\textbf{NC-CPS}) described by the following discrete-time LTI model 
where we assume w.l.o.g. that the state vector is fully available:
%
\begin{equation}\label{eq:sys}
\begin{array}{rcl}
x(t+1)&=&A x(t) + B u (t) + B_d d_x(t)\\
y(t)&=&x(t)+d_y(t)
\end{array}
\end{equation}
where $t\in\Z_+:=\{0,1,...\},$ $x(t)\in \rr^n$ denotes the plant  state,   $u(t)\in\rr^{m}$ the control input, $y(t)\in \rr^n$ the output state measure and $d_x(t) \in \Df_x \subset \rr^{d_x},\, d_y(t) \in \Df_y \subset \rr^{n},\, \forall t \in \Z_+,$ exogenous bounded plant and measure disturbances, respectively. 
 Moreover (\ref{eq:sys}) is subject to the following state and input set-membership constraints:
\begin{equation}\label{eq:constraints}
u(t)\in \Uf,\quad  x(t)\in \Xf, \,\, \forall t \geq 0,
\end{equation}
\begin{definition}\label{UBB}
Let $S$ be a neighborhood of the origin. The closed-loop  trajectory of (\ref{eq:sys})-(\ref{eq:constraints})  is said to be Uniformly Ultimate Boundedness  (UUB) in $S$ if for all $\mu >0$ there exists $T(\mu)>0$ and  $u:=f(y(t))\in \Uf$ such that, for every $\|x(0)\| \leq \mu,$ $x(t)\in S$ $\forall d_x(t) \in \Df_x,\, \forall d_y(t) \in \Df_y,\, \forall t \geq T(\mu).$ 
\end{definition}
\begin{definition}\label{definition:one-step-controllable-sets}
	A set $\mathcal{T}\subseteq \rr^n$ is Robustly Positively Invariant (RPI) for (\ref{eq:sys})-(\ref{eq:constraints}) if there exists a control law $u:=f(y(t))\in \Uf$ such that,  once the closed-loop solution
	$x(t+1)=Ax(t)+Bf(y(t))+ B_dd_x$ 
	enters inside that set at any given time $t_0$, it remains in it for all future  instants, i.e. $x(t_0) \in \mathcal{T} \rightarrow x(t) 
	\in \Xi , \forall d_x(t) \in \Df_x,\,\forall d_y(t) \in \Df_y,\, \forall t \geq t_0.$ 
	\hfill $\Box$
\end{definition}
\begin{definition}
	Given the sets $\Af, \Ef\!\! \subset\!\! \rr^n$ $\Af\!\oplus\!\Ef\!\!:=\{a\!+\!e:\!\! \,a\!\in\Af, e\!\in\!\Ef\}$ is the {\em Minkowski Set Sum} and
%
	 $\Af\!\sim \!\Ef\!\!:=\{a\in \Af: \, a+e \in\Af,\,\forall e\in \Ef\}$  the {\em Minkowski Set Difference}.
	\hfill $\Box$
\end{definition}

\subsection{Set-theoretic receding horizon  control scheme (ST-RHC)}\label{section:dual-mode}

In the sequel, the receding  horizon control scheme proposed in \cite{Angeli2008} and based on the philosophy developed in the seminal paper  \cite{BeRh71}  is summarized.\\
Given the constrained LTI system (\ref{eq:sys})-(\ref{eq:constraints}),  determine a state-feedback $u(\cdot)=f(y(\cdot))\in \mathcal{U}$ capable  \textit{i)} to asymptotically stabilize (\ref{eq:sys}) and \textit{ii)} to drive  the state trajectory $x(\cdot)\in \mathcal{X}$ within a pre-specified  region $\mathcal{T}^0$ in a finite number of steps $N$ regardless of any disturbance realization $d_x(t)\in \Df_x,\,\,d_y(t)\in \Df_y.$

\noindent The latter can be addressed by resorting to the following receding horizon control  strategy:\\
\noindent {\bf{Off-line -}}
\begin{itemize}
\item Compute a   stabilizing state-feedback control law  $u^0(\cdot)=f^0(y(\cdot))$ complying with (\ref{eq:constraints}) 
and the associated RPI region $\mathcal{T}^0;$ 
\item 
Starting from $\mathcal{T}^0,$   determine a sequence of $N$ robust one-step ahead controllable sets $\mathcal{T}^i$ (see \cite{Bl08}): 
	\begin{equation} \label{eq:family-one-step}
	\begin{array}{l}
\mathcal{T}^0 :=  \mathcal{T}\\
\mathcal{T}^i  :=  \{ x \!\! \in \Xf:\forall d_x(t)\in \Df_x,\,d_y(t)\in \Df_y,\,\, \exists \, u \in \Uf:\\
		 \quad \qquad A (x+d_y(t)) + B u +B_dd_x(t) \in	{\mathcal{T}}^{i-1} \}\\
	 \,\,\,\,\,\,\,\,\,\,\,\,\,\,\,= \{ x \!\! \in \Xf: \exists \, u \in \Uf:  A x + B u   \in 
	\tilde{\mathcal{T}}^{i-1} \},
	i=1,\ldots, N
	\end{array}
	\end{equation}
where $\tilde{\mathcal{T}}^{i-1}:=\mathcal{T}^{i-1} \sim  B_d \Df_x \sim  A \Df_y.$ 
\end{itemize}

\noindent {\bf{On-line -}}

Let $x(0) \in \displaystyle\bigcup_{i=0}^N \mathcal{T}^{i},$ the command $u(t)$ is obtained as follows:
\begin{itemize}
	\item  Let $i(t) := \min \{i: y(t) \in \mathcal{T}^i \}$
	\item If $i(t)=0$ then $u(t)=f^0(y(t))$ else solve the following semi-definite programming (SDP) problem:
	\begin{equation}\label{fun_opt_b_3}
	{u}(t) =  \arg \min J_{j(t)}(y(t),u) \quad s.t.
	\end{equation}
	\begin{equation}\label{cond_opt_b_5}
	A x(t) + B u \in  {\tilde{\mathcal{T}}}^{i(t)-1},~ u \in \Uf 
	\end{equation}
	where $J_{j(t)}(y(t),u)$ is a cost function and $j(t)$ a time-dependent selection index.
\end{itemize}
\begin{remark}\label{remark:varaible_cost_function}
It is worth noticing that the cost function $J_{j(t)}(y(t),u)$ can be arbitrarily chosen without compromising the final objective of the control strategy and, in principle, it may be changed at each time instant.
\hfill $\Box$ 
\end{remark}
\section{PROBLEM FORMULATION}\label{section:problem_formulation}
\begin{figure}[!h]
\centering
\includegraphics[width=0.8\linewidth]{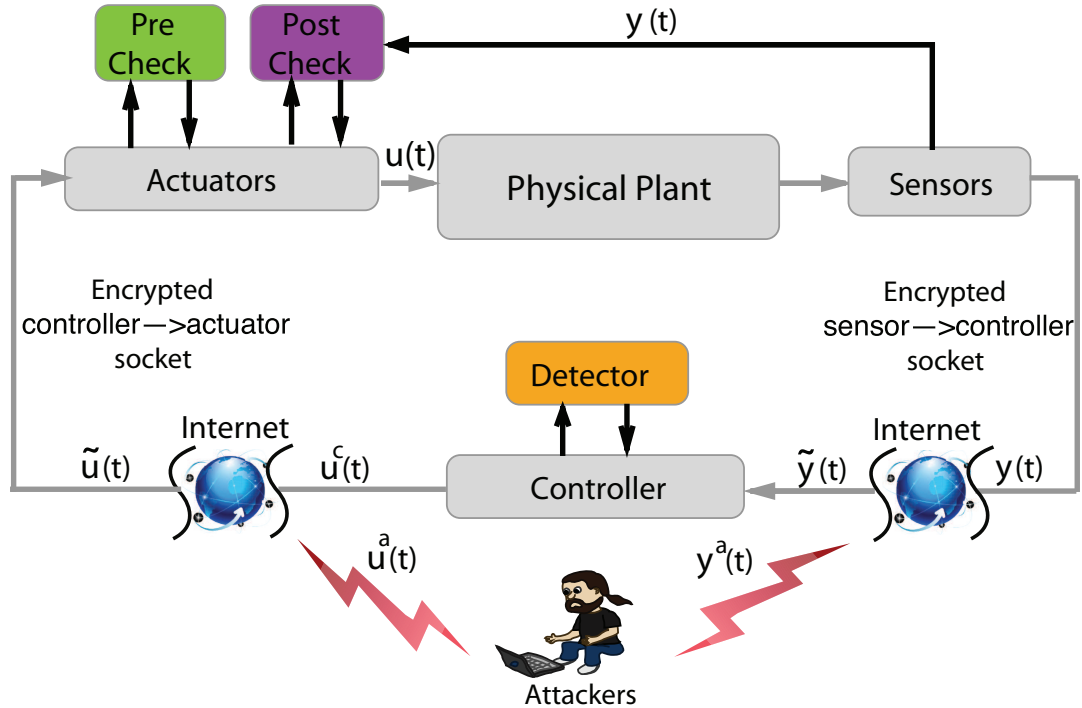}
\caption{Encrypted NC-CPS over Internet}
\label{fig:distribuited_control_architecture}
\end{figure}

In the sequel,  we consider CPSs  whose  physical plant is modeled as (\ref{eq:sys})-(\ref{eq:constraints}),
while the controller is  spatially distributed and a cyber median is used to build virtual communication channels from the plant to the controller and {\it{vice-versa}}, see  Fig. \ref{fig:distribuited_control_architecture}. 
We assume that   \textit{sensors-to-controller} and \textit{controller-to-actuators} communications are executed via Internet by means of encrypted sockets while all the remaining channels are local and externally not accessible. Moreover, malicious agents have the possibility to attack the communication over Internet by breaking the protocol security and may  compromise/alter  data flows in both the communication channels. Within such a context, two classes  of attacks will be taken into account: \textit{a)} Denials of Service (DoS)  and  \textit{b)} False Data Injections (FDI):   DoS attacks prescribe that attackers prevent  the standard sensor and controller data flows, 
while FDI occurrences give rise to arbitrary data injection on the relevant system signals, i.e. command inputs and state measurements.\\
%
\noindent Specifically, we shall  model attacks on the actuators as
\begin{equation}\label{acutator_attack_model}
\tilde{u}(t):=u^c(t)+u^a(t)
\end{equation}
where $u^c(t)\in \rr^m$ is the command input determined by the controller, $u^a(t)\in \rr^m$  the attacker perturbation and $\tilde{u}(t)\in\rr^m$  the resulting corrupted signal. Similarly, sensor  attacks has the following structure:
\begin{equation}\label{sensor_attack_model}
\tilde{y}(t):=y(t)+y^a(t)
\end{equation}
where 
$y^a(t)\in \rr^n$ is the attacker signal and $\tilde{y}(t)\in\rr^n$  the resulting corrupted measurement.  

%
%
\noindent From now on, the following assumptions   are made:
\begin{assumption}\label{ass1}
	{\it{An encrypted socked between controller and plant can be on-demand reestablished in at most $T_{encry}$  time instants.}}
\end{assumption}
\begin{assumption}\label{ass2}
	{\it{The minimum amount of time $T_{viol}$ required to violate the  cryptography algorithm is not vanishing, i.e   
	$T_{viol}\geq T_{encry}.$
			}}
\end{assumption}
\begin{assumption}\label{ass3}
 {\it{No relevant channel delays are due to the communication medium, i.e. all the  induced delays are  less than the  sampling time $T_c.$}}
\end{assumption}
\begin{remark}\label{remark:assumptions}
{\it{Assumption}} \ref{ass2}  relies on the fact that communication channels are not compromised for at least $T_{encry}$ time instants downline of
a new  encrypted socked is established. As a consequence,   the plant-controller structure  is guaranteed  w.r.t   the  sensor/actuator data truthfulness. 
\hfill $\Box$
\end{remark}
Then,  the problem we want to solve can be stated as follows:\\
\noindent \textbf{Resilient Control Problem of NC-CPSs subject to cyber attacks (\textbf{RC-NC-CPS})} - 
{\it
Consider  the control architecture of Fig. \ref{fig:distribuited_control_architecture}. Given  the \textbf{NC-CPS} model  (\ref{eq:sys})-(\ref{eq:constraints})  subject to DoS and/or FDI (\ref{acutator_attack_model})-(\ref{sensor_attack_model}) attacks , determine 
%
\begin{itemize}
	\item [ ] \textbf{-(P1)} An anomaly detector module $\mathbf{D}$ capable to discover  cyber attack occurrences;
	
	\item [ ] \textbf{-(P2)} A control strategy 
	$
	u(\cdot)=f(\tilde{y}(\cdot),\mathbf{D})
	$ 
such that  Uniformly Ultimate Boundedness  is ensured  and  prescribed constraints  fulfilled 	regardless of the presence of any admissible attack scenario. Moreover, if 
	$u^a(t)\equiv 0,\,y^a(t) \equiv 0$  (attack free scenario)  and $d_x(t), \,d_y(t)\equiv 0$ (disturbance free scenario)   $\forall t\geq \bar{t}$ then the regulated plant is asymptotically stable.
	\end{itemize}
}

\noindent The \textbf{RC-NC-CPS} problem will be addressed by properly customizing the  dual model set-theoretic control scheme described in Section \ref{section:dual-mode}. 

\section{SET-THEORETIC CHARACTERIZATION AND IDENTIFICATION OF ATTACKS}
In this section, an identification attack module  will be developed.  To this end, the following preliminaries are necessary.


First, notice that  according to  (\ref{fun_opt_b_3})-(\ref{cond_opt_b_5}) the following set-membership conditions hold true:
\begin{eqnarray}\label{robust-set-contraction}
x(t+1)&\in& \mathbf{Y}^+(y(t),u^c(t)) \nonumber \\
\mathcal{T}^{i(t)-1} &\supseteq& \mathbf{Y}^+(y(t),u^c(t))\!:=\!
\{ 
z\in \rr^n\!\!\!:\exists d_x\in \Df_x,\,d_y\!\!\in\!\! \Df_y,\nonumber \\
&&\qquad \qquad z\!=\!Ay(t)\!+\!B{u}^c\!\!+\!\!B_d d_x+\!\!d_y 
\!\}
\end{eqnarray}
with $\mathbf{Y}^+(y(t),u^c(t))$  the expected output prediction set.
Then, by using the classification given in \cite{TeShaSaJo15},
we consider   attackers having  the following disclosure and disrupt resources:
\begin{itemize}
	\item \textit{Disclosure}: An attacker can access to the  command inputs $u(t)$ and to the sensor measures $y(t);$
	\item \textit{Disrupt}: An attacker can inject arbitrary vectors  $u^a(t)$, $y^a(t)$ on the actuator and sensor communication channels but it cannot read and write on the same channel in a single time interval.
\end{itemize} 
Finally,  we consider   attacks belonging to the following categories:
%
\begin{definition}\label{def:stealthy-attack} 
Let us denote with $\mathcal{I}_a$ and $\mathbf{Y}_a^+$  the attacker model knowledge and expected output prediction set, respectively, then
an \textit{Attack with full model knowledge} is an attack with full information, $\mathcal{I}^{full},$ about the closed-loop dynamics of the physical plant, 
	\begin{equation}\label{eq:full_information}
	\mathcal{I}_a\equiv \mathcal{I}^{full}\!\!\!:=\!\!
	\left\{
    (\ref{eq:sys})-(\ref{eq:constraints}), f^0,\!\!
	\{\mathcal{T}^i\}_{i=0}^N,\,
	y(t),\,
	\mbox{opt:}(\ref{fun_opt_b_3})\!\!-\!\!(\ref{cond_opt_b_5}) 
	\right\}
	\end{equation}
and  perfect understanding of the expected output set,
$
	\mathbf{Y}_a^+\equiv \mathbf{Y}^+.
$
\end{definition}
\begin{definition}\label{def:no-stealthy-attack} 
	An \textit{Attack with partial model knowledge} is an attack with partial information, $\mathcal{I}_a$,  about the closed-loop dynamics of (\ref{eq:sys}), e.g.  
	\begin{equation}\label{eq:partial_information}
	\mathcal{I}_a\subset  \mathcal{I}^{full}\,\,\mbox{ and }\,\,\mathbf{Y}_a^+\neq \mathbf{Y}^+
	\end{equation}
\end{definition}
\subsection{Attacks with partial model knowledge}

The next result proposition shows that such attacks cannot compromise the system integrity while remaining stealthy. 
%
\begin{proposition}\label{no-stealthy-attack-identification}
{\it	
Given the NC-CPS model (\ref{eq:sys})-(\ref{eq:constraints}) subject to cyber attacks modeled as (\ref{acutator_attack_model}) and (\ref{sensor_attack_model}) and  regulated by   the state feedback law $u^c(t)=f(\tilde{y}(t))$   obtained via the \textit{ST-RHC} scheme, then a detector module $\mathbf{D},$ capable to reveal  \textit{attacks with partial model knowledge}, $\mathcal{I}^a\subset  \mathcal{I}^{full},$ is achieved as the result of  the following set-membership requirement:
\begin{equation}\label{eq:condition_attack}
\tilde{y}(t+1) \in \mathbf{Y}^+
\end{equation}
}
\end{proposition}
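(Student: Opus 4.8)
The plan is to establish two complementary facts: that the membership test (\ref{eq:condition_attack}) raises no alarm in the attack-free regime, so it is a \emph{sound} detector; and that any partial-knowledge attack able to steer the plant away from its nominal one-step prediction necessarily violates (\ref{eq:condition_attack}) and is therefore flagged.

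First I would settle the attack-free (no false alarm) direction. When $u^a\equiv 0$ and $y^a\equiv 0$ one has $\tilde u(t)=u^c(t)$ and $\tilde y(t+1)=y(t+1)=x(t+1)+d_y(t+1)$. By construction of the ST-RHC command through (\ref{fun_opt_b_3})-(\ref{cond_opt_b_5}), $u^c(t)$ forces the one-step prediction into the contracted target, so the set-membership relation (\ref{robust-set-contraction}) holds and $x(t+1)\in\mathbf Y^+$. Since $\mathbf Y^+$ is defined to absorb both the process perturbation $B_d\Df_x$ and a full measurement term in $\Df_y$, adding $d_y(t+1)\in\Df_y$ keeps $\tilde y(t+1)$ inside $\mathbf Y^+$; hence (\ref{eq:condition_attack}) is a genuine nominal invariant and the detector never triggers spuriously.

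Then I would address the detection direction. For an attack with partial model knowledge, Definition \ref{def:no-stealthy-attack} gives $\mathbf Y_a^+\neq\mathbf Y^+$. Using (\ref{acutator_attack_model})-(\ref{sensor_attack_model}) the corrupted successor measurement reads $\tilde y(t+1)=Ax(t)+B\,u^c(t)+B\,u^a(t)+B_dd_x(t)+d_y(t+1)+y^a(t+1)$, so, relative to the family of outputs already contained in $\mathbf Y^+$, the net displacement injected by the adversary is $B\,u^a(t)+y^a(t+1)$. Remaining stealthy, i.e. keeping $\tilde y(t+1)\in\mathbf Y^+$ at every step, therefore forces this displacement to be reabsorbed inside the disturbance budget $B_d\Df_x\oplus\Df_y$ already baked into $\mathbf Y^+$; any injection exceeding that budget exits $\mathbf Y^+$ and is revealed by (\ref{eq:condition_attack}). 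The point is that, lacking exact knowledge of $\mathbf Y^+$ (precisely because $\mathbf Y_a^+\neq\mathbf Y^+$), the attacker cannot certify a choice of $(u^a,y^a)$ that both perturbs the loop and stays inside $\mathbf Y^+$; consequently a partial-knowledge attack cannot compromise the closed loop while remaining stealthy.

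The main obstacle I expect is making the implication ``stealthy $\Rightarrow$ harmless'' rigorous rather than heuristic: I must separate the disturbance freedom legitimately contained in $\mathbf Y^+$ from the adversarial injection and argue that any $(u^a,y^a)$ keeping $\tilde y(t+1)\in\mathbf Y^+$ at all times acts like an admissible disturbance, hence cannot push the state out of the controllable-set family $\{\mathcal T^i\}$ or violate (\ref{eq:constraints}). The delicate bookkeeping is the mismatch between the true state $x(t)$ entering the plant dynamics and the measured $y(t)$ used to build $\mathbf Y^+$; I would lean on the Minkowski-difference construction $\tilde{\mathcal T}^{i-1}=\mathcal T^{i-1}\sim B_d\Df_x\sim A\Df_y$ to guarantee the prediction set is robust to exactly this mismatch, so the nominal membership test stays exact.
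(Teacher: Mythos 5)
Your proposal is correct and follows essentially the same route as the paper's proof: the attack-free invariance $\tilde{y}(t+1)\in\mathbf{Y}^+$ (the paper's relation (\ref{set-prediction-membership})), the dichotomy that a violation is instantly flagged while deliberate stealthiness would require knowledge of $\mathbf{Y}^+$ that a partial-knowledge adversary lacks by Definition \ref{def:no-stealthy-attack}, and the observation that any injection absorbed within the disturbance budget $B_d\Df_x\oplus\Df_y$ acts as an admissible disturbance so feasibility of $u^c(t+1)$ is retained and the attack is harmless. If anything, your explicit isolation of the adversarial displacement $Bu^a(t)+y^a(t+1)$ and your appeal to the tightening $\tilde{\mathcal{T}}^{i-1}=\mathcal{T}^{i-1}\sim B_d\Df_x\sim A\Df_y$ to handle the true-state/measurement mismatch is slightly more careful than the paper's own sketch.
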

\begin{proof}
Under the attack free scenario hypothesis,    the current  control action $u^c(t)$ guarantees that the   one-step ahead state evolution $y^+:=A\tilde{y}(t)+Bu^c(t)+B_dd_x(t)+d_y(t)$ belongs to $\mathbf{Y}^+:$
\begin{equation}\label{set-prediction-membership}
y^+\in \mathbf{Y}^+(\tilde{y}(t),u^c(t)) ,\,\,\,\,\,\,\forall d_x(t)\in \Df_x,\,d_y(t)\in \Df_y
\end{equation}
Since  cyber attacks can occur,  two operative scenarios can arise at the next time instant $t+1:$
$$
(i)\,\,\, \tilde{y}(t+1) \notin \mathbf{Y}^+, \,\,\,\,\,\,\, (ii) \,\,\, \tilde{y}(t+1) \in \mathbf{Y}^+
$$
%
%
If (i) holds true  then the attack is 
 instantaneously detected.
Otherwise when (ii) takes place, the following arguments are exploited.  First, an attacker could modify the control signal by adding a malicious data $u^a(t)$ and, simultaneously, the detection can be avoided by infecting the effective measurement  $y(t)$ as follows:
%
$$
\mbox{Find } y^a(t): y(t)+y^a(t)\in \mathbf{Y}^+. 
$$
Because the set $\mathbf{Y}^+$ is unknown (see \textit{Definition} \ref{def:no-stealthy-attack})   such a  reasoning is not feasible. 
A second possible scenario could consist in  injecting small sized  perturbations $u^a(t)$ and $x^a(t)$ such that  
%
$$
Bu^a(t)+B_dd_x(t)\in \Df_x \mbox{ and } x^a(t)+d_y(t)\in \Df_y,
$$
Clearly,  in this case by construction    the computed  command $u^c(t+1)$    remains feasible.
\end{proof}

\subsection{Attacks with full model knowledge}\label{section:attack_full_model_knoledge}

A simple stealthy attack can be achieved by means of   the following steps:

{\it{ 
		\vspace{-0.25 cm}
		\noindent \hrulefill
		\vspace{-0.3 cm}
		\begin{center}
			\textbf{Stealthy Attack algorithm}
		\end{center}
		\vspace{-0.5 cm}
		\noindent \hrulefill

		\noindent \textbf{Knowledge:} $\mathcal{I}^{full}$
	\fontsize{10}{10}\selectfont

		\begin{algorithmic}[1]
			\State Acquire ${y}(t);$ \label{primo_step_stealthy}
			\State Estimate the  control action $\hat{u}^c(t)$ by emulating the  optimization (\ref{fun_opt_b_3})-(\ref{cond_opt_b_5}); 
			\State Compute the expected disturbance-free one-step ahead state measurement
			$
			\bar{y}^+=Ay(t)+B\hat{u}^c(t) \in \mathbf{Y}_a^+
			$
			\State Corrupt  $u^c(t)$ with an arbitrary malicious admissible  signal $u^a(t)$ such that
			$
			\hat{u}^c(t)+u^a(t)\in \mathcal{U}
			$
			\State  Corrupt the output vector $y(t+1)$, according to expected one-step state evolution $\bar{y}^+$ i.e.
			$
			y^a(t): \,\,y(t+1)+y^a(t)=: \bar{y}^+
			$
			\State $t\leftarrow t+1$, goto Step \ref{primo_step_stealthy}
		\end{algorithmic}
		\vspace{-0.35 cm}
		\hrulefill
	}} 

\noindent Note  that the above attack can never be identified by the proposed detector $\mathbf{D}$ because condition (\ref{eq:condition_attack}) is always satisfied.
As a consequence,  the only way to detect it is to increase the information available at  the defender side  $\mathcal{I}^{full}$ so that the {\it{partial model knowledge}}attack structure is re-considered:
%
\begin{equation}
\mathbf{Y}_a^+\neq \mathbf{Y}^+
\end{equation}
The key idea  traces the philosophy behind the  watermarking approach \cite{MoWeSi15}, where the defender superimposes a noise control signal  (new information not available at the defender side) in order to authenticate the physical dynamics.  In particular, a watermarking-like behavior can be straightforwardly obtained 
by  using  the  \textit{ST-RHC}  property discussed  in \textit{Remark} \ref{remark:varaible_cost_function}. 
\begin{proposition}\label{set-theoretic-watermarking}
{\it
Let  (\ref{eq:sys})-(\ref{eq:constraints}) and (\ref{acutator_attack_model})-(\ref{sensor_attack_model})  be the plant  and the FDI attack models, respectively. Let
\begin{equation}\label{cost_functions}
\mathbf{J}=\left\{J_k(\tilde{y}(t),u)\right\}_{k=1}^{N_j}
,\,\,
\mathbf{F^0}=\left\{f^0_k(\tilde{y}(t))\right\}_{k=1}^{N_j}
,\,\,N_j>1
\end{equation}
be  finite sets of cost functions and  stabilizing state-feedback control laws compatible with $\mathcal{T}^0,$ respectively.  Let $j(t):\Z_+\rightarrow [1,\ldots, N_j]$ be a  random function. 
If at each time instant $t$  the command input  $u^c(t)$ is obtained as the solution of  (\ref{fun_opt_b_3})-(\ref{cond_opt_b_5}) with   $J_{j(t)}(\tilde{y}(t),u)$  and  $f^0_{j(t)}(\tilde{y}(t))$  randomly chosen, then the anomaly detector module (\ref{eq:condition_attack}) is capable to detect  complete model  knowledge $\mathcal{I}^{full}$ attacks.
%
} 
\end{proposition}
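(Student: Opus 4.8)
The plan is to reduce the full model knowledge case to the partial model knowledge setting already handled by Proposition~\ref{no-stealthy-attack-identification}. First I would recall why the Stealthy Attack algorithm succeeds in the absence of randomization: the attacker emulates the optimization (\ref{fun_opt_b_3})--(\ref{cond_opt_b_5}) and reconstructs the applied command exactly, $\hat{u}^c(t)=u^c(t)$, so that its expected output set coincides with the defender's one, $\mathbf{Y}_a^+=\mathbf{Y}^+$, and condition (\ref{eq:condition_attack}) is trivially met at every step. The crux of the proof is therefore to show that randomizing the pair $(J_{j(t)},f^0_{j(t)})$ destroys this exact reconstruction.

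Next I would formalize the information gap. Since $j(t)$ is drawn by the defender from a private random source and is \emph{not} part of $\mathcal{I}^{full}$ defined in (\ref{eq:full_information}), even a full-knowledge attacker can at best guess an index $\hat{j}(t)$ and compute $\hat{u}^c(t)$ as the solution of (\ref{fun_opt_b_3})--(\ref{cond_opt_b_5}) associated with $(J_{\hat{j}(t)},f^0_{\hat{j}(t)})$. Because $N_j>1$ and the elements of $\mathbf{F^0}$ and $\mathbf{J}$ in (\ref{cost_functions}) are distinct, the map $j\mapsto u^c(t)$ is non-constant, so whenever $\hat{j}(t)\neq j(t)$ one has, generically, $\hat{u}^c(t)\neq u^c(t)$. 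From the definition of the expected output set in (\ref{robust-set-contraction}) the two prediction sets are then translates of one another,
\begin{equation*}
\mathbf{Y}_a^+=\mathbf{Y}^+\oplus\{B(\hat{u}^c(t)-u^c(t))\},
\end{equation*}
whence $\mathbf{Y}_a^+\neq\mathbf{Y}^+$, i.e. exactly the partial model knowledge condition (\ref{eq:partial_information}).

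Having recovered (\ref{eq:partial_information}), I would close the argument by invoking Proposition~\ref{no-stealthy-attack-identification}: the stealthy construction forces $\tilde{y}(t+1)=\bar{y}^+\in\mathbf{Y}_a^+$, but the detector tests $\tilde{y}(t+1)\in\mathbf{Y}^+$, and since the two sets differ the membership fails as soon as the induced control mismatch escapes the disturbance reachable set, $B(\hat{u}^c(t)-u^c(t))\notin B_d\Df_x\oplus\Df_y$. I expect the main obstacle to be precisely this last requirement: the families $\mathbf{J}$ and $\mathbf{F^0}$ must be designed so that distinct indices yield commands separated by more than $B_d\Df_x\oplus\Df_y$, otherwise the translation between $\mathbf{Y}_a^+$ and $\mathbf{Y}^+$ could be absorbed by admissible disturbances and the attack would stay stealthy. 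Under such a non-degeneracy (separation) condition on the control library, a wrong guess --- which occurs with probability at least $1-1/N_j$ at each step --- triggers (\ref{eq:condition_attack}), so that the $\mathcal{I}^{full}$ attack is detected, with detection probability approaching one as time grows.
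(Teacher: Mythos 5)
Your proposal follows the same skeleton as the paper's own proof---the privately drawn index $j(t)$ enlarges the defender's information set beyond $\mathcal{I}^{full}$, hence $\mathcal{I}_a\subset\mathcal{I}^{full}(t)$, the attacker's prediction set no longer coincides with the defender's, and the full-knowledge attack collapses to the partial-knowledge case of Proposition~\ref{no-stealthy-attack-identification}---but you push the argument two steps further than the paper, and those steps are exactly where the paper is weakest. The paper's proof stops at the bare assertion $\mathbf{Y}_a^+\neq\mathbf{Y}^+$ and concludes that the detection rule (\ref{eq:condition_attack}) is ``effective''; you instead derive the explicit relation $\mathbf{Y}_a^+=\mathbf{Y}^+\oplus\{B(\hat{u}^c(t)-u^c(t))\}$ and observe, correctly, that the spoofed measurement $\bar{y}^+\in\mathbf{Y}_a^+$ violates the membership test $\bar{y}^+\in\mathbf{Y}^+$ only when $B(\hat{u}^c(t)-u^c(t))\notin B_d\Df_x\oplus\Df_y$. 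Your resulting separation condition on the library $\mathbf{J},\mathbf{F^0}$ is a genuine extra hypothesis not present in the proposition's statement, and it is not pedantry: the paper's own proof of Proposition~\ref{no-stealthy-attack-identification} concedes that perturbations absorbed within $\Df_x,\Df_y$ stay stealthy, and in the numerical section Attack~4 remains undetected from $t=3.84$ to $t=4.24\,sec$ precisely because the induced mismatch cannot be discriminated from the disturbance realizations. Your other two qualifications are likewise needed and absent from the paper: distinct indices must actually produce distinct optimizers (the map $j\mapsto u^c(t)$ can degenerate at particular states, e.g.\ when the constraint (\ref{cond_opt_b_5}) essentially pins down the solution), and even under non-degeneracy a correct guess $\hat{j}(t)=j(t)$, occurring with probability of order $1/N_j$ per step, leaves that step invisible, so the honest conclusion is detection with probability approaching one over time rather than the deterministic, instantaneous detection the paper's wording suggests. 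In short, you take the same route but arrive at the sound version of the result: the proposition holds under a separation/non-degeneracy design condition and in a probabilistic sense---which is also the behavior the paper's own simulations exhibit.
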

\begin{proof}
Because the additional information $j(t)$ is not available to the attacker,  then  the following  time-varying information flow results:
\begin{equation}
\mathcal{I}^{full}(t):=
	\left\{
	 (\ref{eq:sys})\!\!-\!\!(\ref{eq:constraints}),f^0\!\!,
	\{\mathcal{T}^i\}_{i=0}^N,\,
	\tilde{y},
	\mbox{opt: }(\ref{fun_opt_b_3})\!\!-\!\!(\ref{cond_opt_b_5}),j(t) 
	\right\} \supset {I}^{full}
\end{equation}
This implies that 
$\mathcal{I}_a\subset \mathcal{I}^{full}(t)$
and,  as a consequence, a perfect stealthy attack is no longer  admissible
$$
\mathbf{Y}^+(\tilde{y}(t),u^c(j(t)))\neq\mathbf{Y}^+_a(\tilde{y}(t),u^c(t))
$$
Therefore, the detection rule (\ref{eq:condition_attack}) is effective. 
\end{proof}

\noindent Finally, by collecting the results of \textit{Propositions} \ref{no-stealthy-attack-identification}-\ref{set-theoretic-watermarking},  a solution to the   \textbf{P1}  problem is given by the following detector module:
\begin{equation}\label{detector_module}
\mbox{\textbf{Detector(D)}$(\tilde{y}(t))$:=}
\left\{
\begin{array}{l}
\!\!\!\!\mbox{attack}\quad \mbox{if}\quad \tilde{y}(t+1) \notin \mathbf{Y}^+ \\
\!\!\!\!\mbox{no attack}\quad \mbox{if}\quad \tilde{y}(t+1) \in \mathbf{Y}^+ 
\end{array}
\right.
\end{equation}
\section{Cyber-Physical countermeasures for resilient and secure control}

%
Once a  attack has been physically detected, the following cyber countermeasures can be adopted to recover an attack free scenario:
\begin{itemize}
	\item  Interrupt all the \textit{sensor-to-controller} and \textit{controller-to-actuators} communications links;
	\item  Reestablishing new secure encrypted  channels.
\end{itemize}
From a physical point of view, the prescribed actions imply that for an assigned time interval, namely $T_{encry},$ update measurements and control actions are not available at the controller and actuator sides, respectively.  Therefore, the main  challenge is:
\begin{center}
\vspace{-0.2 cm}
\textit{How we can ensure that, at least, the minimum safety requirements $x(t)\in \mathcal{X},\, u(t)\in\mathcal{U}$ are met while the communication are interrupted for $T_{ecnry}$ time instants?}
\vspace{-0.2 cm}
\end{center}
The next section will be devoted to answer to this key question.

\subsection{$\tau$-steps feasible sets and associated set-theoretic controller ($\tau$-ST-RHC)}
Let $\mathcal{T}$ be a RPI region for the plant model (\ref{eq:sys})-(\ref{eq:constraints}) subject to the induced time-delay $\tau,$ see \cite{FraTeFa15}.
Then, a family of $\tau$-steps controllable sets, $\{\mathcal{T}^i(\tau)\}_{i=1}^{N},$  can be defined as follows
\begin{equation}\label{k-steps-feasible-sets}
\begin{array}{l}
\mathcal{T}^0(\tau)  := \mathcal{T}\vspace{-0.7cm}\\
\mathcal{T}^i(\tau) :=\!\!\! \{ x \!\! \in \!\!\Xf: \exists \, u\!\in\! \Uf:\!\!\!\! \overbrace{A^k}^{A(k)}\!\!\! x\! +\! (\displaystyle \!\overbrace{\sum_{j=0}^{k-1}{\!\!\!A^j}B)}^{B(k)} u\!   \subseteq\! 
\tilde{\mathcal{T}}^{i-1}(\tau)\vspace{-0.2cm}\\
\quad \quad \quad \quad \forall\,\, k\!=\!1,\!\ldots\!,\! \tau \}
\end{array}
\vspace{-0.5cm}
\end{equation}
with 
\begin{equation}\label{p-difference-k-steps}
\left\{
\begin{array}{l}
\tilde{\mathcal{T}}_1^i(\tau)={\mathcal{T}}^i\sim B_d\Df_x\sim  A \Df_y.\\
\tilde{\mathcal{T}}_k^i(\tau)=\!\tilde{\mathcal{T}}_{k-1}^{i}\!\!(\tau)\!\sim\!\! A^{k-1}B_d\Df_x\!\sim\!  A^k \Df_y,\,\,k=2,\ldots, \tau.
\end{array}
\right.
\end{equation}
An equivalent description $\{\Xi^i(\tau)\}$ of (\ref{k-steps-feasible-sets}) can be given in terms of the extended space $(x,\,u):$ 
\begin{equation}\label{k-steps-feasible-aug-set}
\begin{array}{c}
\Xi^i(\tau)=\{ (x,\,u) \in \Xf\times\Uf : A(k)x\! +\! {B(k)} u\!   \in\! \tilde{\mathcal{T}}^{i-1}(\tau),\\
 \forall k=1,\ldots, \tau\}\\
\,\,\,\,\,\,\,\,\,\,\,\,\,\,\,\,\,\,=\displaystyle\bigcap_{k=1}^{\tau}\! \{\!(x,\,u)\!\! \in \Xf\!\!\times\Uf : A(k)x\! +\! {B(k)} u\!   \in\!\! \tilde{\mathcal{T}}^{i-1}\!(\tau)\}
\end{array}
\end{equation}
%
Hence, the sets of all the admissible state and  input vectors are simply determined as follows:
\begin{equation}\label{k-steps-feasible-x-u-set}
\mathcal{T}^i(\tau)={Proj}_{x} \Xi^i(\tau),\quad \mathcal{U}^i(K)={Proj}_u \Xi^i(\tau)
\end{equation}
where ${Proj}_{(\cdot)}$ is the projection operator \cite{Bl08}.
\begin{proposition}\label{property:control_action_feasible_k_steps}
{\it 
Let  the set sequences  $\{\Xi^i(T_{encry})\}_{i=1}^{N},$ $\{\mathcal{U}^i(T_{encry})\}_{i=1}^{N},$ $\{\mathcal{T}^i(T_{encry})\}_{i=1}^{N}$ be given. Under the attack free scenario hypothesis  ($u^a(t)\equiv0,\, y^a(t)\equiv 0$), 
 the control action $u^c(t),$ computed by means of the following convex optimization problem
	\begin{equation}\label{new_opt_k_steps_fun}
	{u}^c(t) =  \arg \min_u J_{j(t)}(y(t),u) \quad s.t.
	\end{equation}
	\begin{equation}\label{new_opt_k_steps_constr}
	[y(t),\,u]\in \Xi^i(T_{encry}),\quad  u \in \Uf^i(T_{encry}) 
	\end{equation}
	%
and consecutively applied to (\ref{eq:sys}) for $T_{ecnry}$ time instants, ensures: \textit{i)} constraints fulfillment; \textit{ii)} state trajectory confinement, i.e. $x(t+k)\in \mathcal{T}^{i(t)-1}(T_{encry}),\,\forall k=1,\ldots, T_{encry},$ regardless of any $d_x(\cdot)\in \mathcal{D}_x$ and $\,d_y(\cdot)\in \mathcal{D}_y$ realizations and  any  cost function $J_{j(t)}(y(t),u).$
}
\end{proposition}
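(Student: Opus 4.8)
The plan is to chain three ingredients: the projection identity (\ref{k-steps-feasible-x-u-set}) to certify feasibility of the optimization, the closed-form expression of the state reached after holding a constant input for $k$ steps, and the nested Minkowski-difference tightening (\ref{p-difference-k-steps}) hard-wired into $\tilde{\mathcal{T}}_k^{i-1}(T_{encry})$ to absorb the disturbance accumulation. Throughout I fix $i(t):=\min\{i:\,y(t)\in\mathcal{T}^i(T_{encry})\}$ and treat the representative case $i(t)\geq1$ (for $i(t)=0$ the claim is immediate from the RPI property of $\mathcal{T}^0(T_{encry})=\mathcal{T}$).

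First I would settle the input part of \textit{i}). By (\ref{k-steps-feasible-x-u-set}), $\mathcal{T}^{i(t)}(T_{encry})={Proj}_x\,\Xi^{i(t)}(T_{encry})$, so $y(t)\in\mathcal{T}^{i(t)}(T_{encry})$ forces the existence of at least one $u$ with $[y(t),u]\in\Xi^{i(t)}(T_{encry})$; hence the feasible set of (\ref{new_opt_k_steps_fun})--(\ref{new_opt_k_steps_constr}) is nonempty and $u^c(t)$ is well defined for every selection $J_{j(t)}$, in agreement with \textit{Remark} \ref{remark:varaible_cost_function}. Since $\Xi^{i(t)}(T_{encry})\subseteq\Xf\times\Uf$, any minimizer obeys $u^c(t)\in\Uf$, and because the very same $u^c(t)$ is frozen during the whole interruption window, $u(t+k)\in\Uf$ for all $k$ — independently of the cost.

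Next I would compute the reached state. Iterating (\ref{eq:sys}) with the frozen input gives, for $k=1,\ldots,T_{encry}$,
\[x(t+k)=A(k)\,x(t)+B(k)\,u^c(t)+\sum_{j=0}^{k-1}A^jB_d\,d_x(t+k-1-j),\]
with $A(k)=A^k$ and $B(k)=\sum_{j=0}^{k-1}A^jB$ as in (\ref{k-steps-feasible-sets}), and the disturbance tail ranging over $\bigoplus_{j=0}^{k-1}A^jB_d\Df_x$. Substituting $x(t)=y(t)-d_y(t)$ and using that, by the ``$\forall k$'' intersection defining $\Xi^{i(t)}(T_{encry})$ in (\ref{k-steps-feasible-aug-set}), the single optimizer satisfies $A(k)\,y(t)+B(k)\,u^c(t)\in\tilde{\mathcal{T}}_k^{i(t)-1}(T_{encry})$ simultaneously for every $k$, I would obtain
\[x(t+k)\in\tilde{\mathcal{T}}_k^{i(t)-1}(T_{encry})\oplus A^k\Df_y\oplus\bigoplus_{j=0}^{k-1}A^jB_d\Df_x,\]
where the $-A^kd_y(t)$ term is placed in $A^k\Df_y$ through the symmetry of $\Df_y$ underlying (\ref{p-difference-k-steps}).

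Finally I would close the inclusion. Since $0\in\Df_y$, one has $A^k\Df_y\subseteq\bigoplus_{j=1}^{k}A^j\Df_y$, so the set added back above is contained in $\big(\bigoplus_{j=0}^{k-1}A^jB_d\Df_x\big)\oplus\big(\bigoplus_{j=1}^{k}A^j\Df_y\big)$, which is exactly what (\ref{p-difference-k-steps}) subtracts from $\mathcal{T}^{i(t)-1}$. The elementary identities $\mathcal{A}\sim\mathcal{E}_1\sim\mathcal{E}_2=\mathcal{A}\sim(\mathcal{E}_1\oplus\mathcal{E}_2)$ and $(\mathcal{A}\sim\mathcal{E})\oplus\mathcal{E}\subseteq\mathcal{A}$ then give $x(t+k)\in\mathcal{T}^{i(t)-1}(T_{encry})$ for every $k$, which is claim \textit{ii}); as every controllable set lies in $\Xf$, this also yields $x(t+k)\in\Xf$ and completes \textit{i}). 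I expect the main obstacle to be the bookkeeping of the disturbance accumulation in this last step — specifically, verifying the containment $A^k\Df_y\subseteq\bigoplus_{j=1}^{k}A^j\Df_y$ (the tightening being conservative on the measurement noise yet still valid) and confirming that one frozen $u^c(t)$ can meet all $T_{encry}$ inclusions at once, which is precisely what the intersection form of $\Xi^{i(t)}(T_{encry})$ secures.
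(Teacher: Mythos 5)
Your proof is correct and takes essentially the same route as the paper's: feasibility of (\ref{new_opt_k_steps_fun})--(\ref{new_opt_k_steps_constr}) via the projection (\ref{k-steps-feasible-x-u-set}), the closed-form frozen-input evolution $x(t+k)=A(k)x(t)+B(k)u^c(t)+\sum_{j=0}^{k-1}A^jB_d\,d_x(\cdot)$, and absorption of the accumulated errors by the tightened sets (\ref{p-difference-k-steps}). You additionally spell out the measurement-noise bookkeeping (the $-A^k d_y(t)$ term, its symmetry-based placement in $A^k\Df_y$, and the containment $A^k\Df_y\subseteq\bigoplus_{j=1}^{k}A^j\Df_y$, which tacitly needs $0\in\Df_y$), a detail the paper's terser proof leaves implicit.
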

%
\begin{proof}
By construction of   (\ref{k-steps-feasible-sets})-(\ref{k-steps-feasible-x-u-set}),   it is always guaranteed that, if $x(t)\in \mathcal{T}^{i(t)}(T_{encry})$, the optimization  (\ref{new_opt_k_steps_fun})-(\ref{new_opt_k_steps_constr}) is feasible  and an admissible  $u^c(t)$ there exists. Moreover  if for $T_{encry}$ time instants the command ${u}^c(t)$  is consecutively applied to (\ref{eq:sys}),  one has that
$$
x(t+k)\!=\!A^k x(t)\!+\!\!\!\sum_{j=0}^{k-1}\!\!\!(A^jB)u^c(t)\!+\!\!\!\underbrace{\sum_{j=0}^{k-1}\!\!\!(A^jB_d)d_x(j)}_{unknown},\,k=1,\!\ldots,\!T_{encry}
$$
Then in virtue of  (\ref{k-steps-feasible-sets}),  the disturbance-free evolution $\bar{x}(t+k)$ is
$$
\bar{x}(t+k)\in \tilde{\mathcal{T}}^{i-1}_k(T_{encry}),\,\forall\,k=1,\!\ldots,\!T_{encry}
$$
and
the following implications hold true
$$
\begin{array}{c}
\forall d_x(t+k)\in \Df_x, d_y(t+k)\in \Df_y\!\!\!\! \implies\!\!\!\! x(t+k+1)\!\!\!\in \!\!{\mathcal{T}}^{i-1}_k(T_{encry})\\
k=0,\ldots, T_{encry}-1.
\end{array}
$$
%
\end{proof}	
\begin{remark}\label{remark:k-step-sets-computation}
The optimization  (\ref{new_opt_k_steps_fun})-(\ref{new_opt_k_steps_constr}) is solvable in polynomial  time and the required computational burdens   are irrespective 
of the number of steps $T_{encry}.$  Further  details on the computation of  the $\tau$-steps controllable sets  can be found in 
\cite{Bl08},\cite{KuVa97} for comprehensive tutorials and  \cite{MPT3},\cite{Ell_toolbox} for available  toolboxes.  
%
\hfill $\Box$
\end{remark}
In the sequel, the control strategy arising from the solution of (\ref{new_opt_k_steps_fun})-(\ref{new_opt_k_steps_constr}) will be named   \textbf{$\tau$-ST-RHC} controller.  Note that it is not able to address  all the  attack scenarios, because  if the more recent   action $u^c(t)$ 
has been corrupted,  the Proposition \ref{property:control_action_feasible_k_steps} statement becomes no longer valid. In such a case,  the  defender can only  use a \textit{smart} actuator module that locally, by means of simple security checks, is able to understand if the most recent command input  is malicious.

\subsection{Pre-Check and Post-Check firewalls modules}\label{pre-post-check}
In what follows,  two complementary modules,  hereafter named \textbf{Pre-Check} and \textbf{Post-Check}, are introduced, see Fig. \ref{fig:distribuited_control_architecture}.
The reasoning behind them  is to passively detect attacks before they could harm the plant. In particular,  such modules are  in charge of checking the following state and input  set-membership requirements:
%
\begin{equation}\label{pre_check}
\mbox{\textbf{Pre-Check}$(i(t))$:=}
\left\{
\begin{array}{l}
\!\!\!\!true\quad \mbox{if}\,\,\,\tilde{u}(t)\in \{\mathcal{U}^{i}(T_{encry})\}_{i=1}^{i(t)}\\
\!\!\!\!false\quad \mbox{otherwise}
\end{array}
\right.
\end{equation}
%
\begin{equation}\label{post_check}
\mbox{\textbf{Post-Check}$(i(t))$:=}\!
\left\{
\begin{array}{l}
\!\!\!\!\!\!true\,\,\, \mbox{if}\,\,\,y(t)\!\in \!\!\{\mathcal{T}^{i}(T_{encry})\}_{i=1}^{i(t)}\!\!\!\!\\
\!\!\!\!\!\!false\,\,\, \mbox{otherwise}
\end{array}
\right.
\end{equation}
Conditions (\ref{pre_check})-(\ref{post_check})  check if the received $\tilde{u}(t)$ and  the measurement $y(t)$ 
 are ``coherent"
with the expected set level $i(t).$  If one of  these tests  fails, then a warning flag  is sent to the actuator and an attack is locally claimed.\\  
%
In response to the received flag, different actions are  performed by the actuator:
if the \textbf{Pre-Check} fails, $\tilde{u}(t)\notin \{\mathcal{U}^{i}(T_{encry})\}_{i=1}^{i(t)},$ then  $\tilde{u}(t)$ is discarded and the admissible stored input, hereafter named $u_{-1}:=\tilde{u}(t-1),$  applied;
if the \textbf{Post-Check} fails, $y(t)\notin \{\mathcal{T}^{i}(T_{encry})\}_{i=1}^{i(t)},$ then an harmful command $\tilde{u}(t-1)$ has been applied bypassing the \textbf{Pre-Check} control.  As a consequence,  $u_{-1}$ cannot be used at the next time instants. In this circumstance,   a possible solution consists in applying the zero input  $u(t)\equiv 0$   until safe communications are reestablished. The latter gives rise to the following problem:

\textit{How can one ensure that the open-loop system subject to $u(t)\equiv 0$
 fulfills the prescribed constraints (\ref{eq:constraints}) and is UUB?
}

\noindent The following  developments provide a formal solution.


\noindent Let  denote with  $\mathcal{T}^{i_{max}},\,\,\,i_{max}\leq N$  the maximum admissible  set computed as follows
%
\begin{equation}\label{safe_free_evolution}
\begin{array}{c}
i_{max}=\displaystyle \max_{i\leq N} \,\,i \quad s.t.\\
\underbrace{A^k\mathcal{T}^i(T_{encry}) \oplus \displaystyle \sum_{j=0}^{k-1}A^{j}B_d\Df_x}_{\mbox{first term}}\oplus\!\!\!\!\!\! \underbrace{A^{k-1}B\mathcal{U}}_{\mbox{second term}}\!\!\!\!\!\!\!\! \in\!\!\!\!\!\!\!  \displaystyle \bigcup_{j=1}^{\min(N, i+T_{viol})}\!\!\!\!\!\!\!\!\!\!\!\mathcal{T}^j(T_{encty})\\
i=1,\ldots, i_{max},\,\,k=1,\ldots, {T_{encry}}.
\end{array}
\end{equation}
Note that  the first term represents the autonomous  state evolution of (\ref{eq:sys}), whereas the second one takes care of an  unknown input
 $u\in \mathcal{U}.$ 
Moreover, the upper bound $\min(N, i_{max}+T_{viol})$ is complying with the Assumption \ref{ass2}, where it is supposed that, after the recovery phase, a new attack could only occurs after $T_{viol}$ time instants.


The reasoning behind the introduction of $\mathcal{T}^{i_{max}}$ concerns with the following feasibility retention arguments. When data (state measurements and control actions) flows are interrupt the {\bf{NC-CPS}} model (\ref{eq:sys}) evolves in an open-loop fashion under a zero-input action. Therefore the computation (\ref{safe_free_evolution}) guarantees that, starting from any initial condition belonging to  $\displaystyle\bigcup_{i=1}^{i_{max}}(\, \mathcal{T}^i(T_{encry})$  the resulting $T_{encty}$ -step ahead state  predictions of (\ref{eq:sys}) are embedded in the worst case within the DoA $\displaystyle\bigcup_{i=1}^N\, \mathcal{T}^i(T_{encry}).$ 
\begin{proposition}\label{free_evolution_bounded_stability}
{\it
Let  $\{\mathcal{T}^i(T_{encry})\}_{i=1}^{N}$  and $\mathcal{T}^{i_{max}}(T_{encry})$  be the $\tau-$step ahead controllable set sequence and the maximum admissible set, respectively.
If the plant model (\ref{eq:sys})   is operating under  a free  attack  scenario  and the state evolution $x(\cdot)$ is confined to $\displaystyle\bigcup_{i=1}^{i_{max}}\, \mathcal{T}^i(T_{encry}),$
then the zero-input  state evolution of  (\ref{eq:sys}) will be confined to  $\displaystyle\bigcup_{i=1}^N\, \mathcal{T}^i(T_{encry})$ irrespective of  any cyber attack occurrence and disturbance/noise realizations. 
}
\end{proposition}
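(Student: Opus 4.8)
The plan is to recognize that the defining inclusion (\ref{safe_free_evolution}) of $i_{max}$ is precisely the set-containment certificate required for the open-loop recovery phase, so the proof reduces to a Minkowski-sum bookkeeping argument. First I would fix the instant $t$ at which the zero-input policy is triggered and select, using the hypothesis $x(\cdot)\in\bigcup_{i=1}^{i_{max}}\mathcal{T}^i(T_{encry})$, an index $i\leq i_{max}$ with $x(t)\in\mathcal{T}^i(T_{encry})$. I would then write the explicit $k$-step prediction of (\ref{eq:sys}) over the recovery window, allowing for the single admissible (possibly corrupted) input $\tilde u\in\mathcal{U}$ that may have been committed just before the channels were severed and zero input thereafter:
\[
x(t+k)=A^k x(t)+A^{k-1}B\,\tilde u+\sum_{j=0}^{k-1}A^j B_d\,d_x(t+j),\qquad k=1,\dots,T_{encry}.
\]

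Next I would lift this pointwise identity to a set inclusion. Since $x(t)\in\mathcal{T}^i(T_{encry})$, $\tilde u\in\mathcal{U}$, and every disturbance sample satisfies $d_x(\cdot)\in\Df_x$, each contribution lands in its corresponding summand, so (up to the harmless reindexing of the disturbance samples inside a Minkowski sum)
\[
x(t+k)\in A^k \mathcal{T}^i(T_{encry})\oplus\sum_{j=0}^{k-1}A^j B_d\,\Df_x\oplus A^{k-1}B\,\mathcal{U}.
\]
This is exactly the left-hand side of (\ref{safe_free_evolution}). By the very definition of $i_{max}$, that Minkowski sum is contained in $\bigcup_{j=1}^{\min(N,\,i+T_{viol})}\mathcal{T}^j(T_{encry})$ for every $k=1,\dots,T_{encry}$, and since $\min(N,\,i+T_{viol})\leq N$ the union is in turn a subset of the domain of attraction $\bigcup_{j=1}^{N}\mathcal{T}^j(T_{encry})$. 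Hence the zero-input trajectory stays confined to the DoA for every admissible disturbance/noise realization, which is the claimed conclusion.

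Finally I would use Assumption \ref{ass2} ($T_{viol}\geq T_{encry}$) to close the timing side of the argument and to justify the ``irrespective of any cyber attack'' clause: because the cryptography cannot be broken in fewer than $T_{viol}\geq T_{encry}$ steps, no fresh injection can act on the plant during the whole $T_{encry}$-long re-encryption window, so the reduction to the attack-free, disturbance-only analysis above is legitimate, and the index of the set containing the state can drift upward by at most $T_{viol}$ before a new attack becomes admissible---exactly the saturation level $\min(N,i+T_{viol})$ used in (\ref{safe_free_evolution}). The main obstacle I anticipate is not the containment algebra, which is routine, but making this index/timing accounting airtight: one must verify that the cap $\min(N,i+T_{viol})$ together with $T_{viol}\geq T_{encry}$ jointly guarantee that during recovery the state never requires a controllable set of index exceeding $N$, and that starting the prediction from the last admissibly-actuated state (rather than from the post-attack $x(t)$) correctly accounts for the lone $A^{k-1}B\mathcal{U}$ input term appearing in (\ref{safe_free_evolution}).
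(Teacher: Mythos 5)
Your proposal is correct and takes essentially the same route as the paper: the paper's own proof is a one-line appeal to the construction (\ref{safe_free_evolution}) of $i_{max}$ (together with $0_m\in\mathcal{U}$ and $\bigcup_{i=1}^{\min(N,\,i_{max}+T_{viol})}\mathcal{T}^i(T_{encry})\subseteq\mathcal{X}$), and your Minkowski-sum bookkeeping --- one admissible input term $A^{k-1}B\,\tilde u$ committed before disconnection, zero input thereafter, disturbances absorbed into $\sum_{j=0}^{k-1}A^jB_d\Df_x$ --- is precisely the unpacking of that containment, including the timing role of Assumption \ref{ass2} noted in the text preceding the proposition. The only nuance you leave implicit is the one the paper states, namely that $0_m\in\mathcal{U}$ makes the zero-input policy itself admissible (and covered by the $A^{k-1}B\,\mathcal{U}$ term), which is harmless.
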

\begin{proof}
Constraints fulfillment and UUB  trivially follow because $0_m\in \mathcal{U}$ and  $\bigcup_{i=1}^{\min(N, i_{max}+T_{viol})}\{\mathcal{T}^i(T_{encry})\}\subseteq \mathcal{X}$.
\end{proof}

\subsection{The RHC algorithm}
The above developments allow to write down the following computable scheme.

{\it{ 
		\noindent \hrulefill
		\vspace{-0.3 cm}
		\begin{center}
			\textbf{Actuators} Algorithm
		\end{center}
		\vspace{-0.5 cm}
		\noindent \hrulefill

		\noindent \textbf{Input:} $\tilde{u}(t),$ \textbf{Pre-Check}, \textbf{Post-Check}\\
		\noindent \textbf{Output:} The applied control input $u$, the expected set-level $\hat{i}$ \\
		\textbf{Initialization:} $\hat{i}=i(0),$ $u_{-1}=u^c(0)$
			\fontsize{10}{9}\selectfont
		\begin{algorithmic}[1]
			\If{\textbf{Pre-Check}($\hat{i}$)==true \& \textbf{Post-Check}($\hat{i}$)==true}
					\If{$\tilde{u}(t)=\emptyset$} \Comment{Command not received}
					\State $u(t)=u^{-1}$; \Comment{Apply previous command} \label{previous_command}
					\Else{} 
					\State $u(t)=\tilde{u}(t),$  $\hat{i}=\hat{i}-1;$ \Comment{estimated  set-level update}
					\EndIf
			\Else{} \If{\textbf{Pre-Check}($\hat{i}$)==false} $u(t)=u_{-1}$
			\Else{} $u(t)=0;$  \Comment{Attack locally detected and free-evolution} \label{free-evol}
					\EndIf
			\EndIf
			\State $u^{-1}\leftarrow u(t)$
			\State $t\leftarrow t+1,$ goto Step 1
		\end{algorithmic}
		\hrulefill
	}} 

{\it{ 
		\noindent \hrulefill
		\vspace{-0.3 cm}
		\begin{center}
			\textbf{$\tau$-ST-RHC} Controller Algorithm (Off-line)
		\end{center}
		\vspace{-0.5 cm}
		\noindent \hrulefill

		\noindent \textbf{Input:} $T_{encry}$ \\
		\noindent \textbf{Output:} $\left\{\Xi^{i}\right\}_{i=0}^{N}(T_{encry}),\left\{\mathcal{T}^{i}\right\}_{i=0}^{N}(T_{encry}),$ $\left\{\mathcal{U}^{i}\right\}_{i=0}^{N}(T_{encry}),$ $i_{max}$
		\fontsize{10}{9}\selectfont
		\begin{algorithmic}[1]
			\State Compute a RPI region $\mathcal{T}^0$
			\State Compute the families  of $T_{encry}-$steps controllable sets $\left\{\Xi^{i}\right\}_{i=0}^{N}(T_{encry}),\left\{\mathcal{U}^{i}\right\}_{i=0}^{N}(T_{encry}),\left\{\mathcal{T}^{i}\right\}_{i=0}^{N}(T_{encry})$ by resorting to recursion (\ref{k-steps-feasible-aug-set}) and to the projection (\ref{k-steps-feasible-x-u-set})
			\State Determine the maximum index $i_{max}$ satisfying (\ref{safe_free_evolution})
			\State Collect $N_j>1$ cost functions (\ref{cost_functions}) and  terminal control law $f^0_{j(t)}(\tilde{x}(t))$
		\end{algorithmic}
		\hrulefill
	}} 

{\it{ 
		\noindent \hrulefill
		\vspace{-0.3 cm}
		\begin{center}
			\textbf{$\tau$-ST-RHC} Controller Algorithm (On-line)
		\end{center}
		\vspace{-0.5 cm}
		\noindent \hrulefill

		\noindent \textbf{Input:} $\tilde{y}(t),$ $\left\{\Xi^{i}\right\}_{i=0}^{N}(T_{encry}),$ $\left\{\mathcal{T}^{i}\right\}_{i=0}^{N}(T_{encry}),$ $\left\{\mathcal{U}^{i}\right\}_{i=0}^{N}(T_{encry}),$ \textbf{Detector}$(\tilde{y}(t)),$ $i_{max},$ $\mathbf{J}$\\
		\noindent \textbf{Output:} Computed command $u^c(t)$\\
		\textbf{Initialization:} status=no attack, timer=0, encrypted communication channels, initialize \textbf{Detector}, \textbf{Pre-Check},  \textbf{Post-Check}, \textbf{Actuator} modules\\
		\textbf{Feasibility start condition:} $\exists i<(i_{max}+T_{viol})\leq N:\, x(0)\in \bigcup_{i=0}^{max(N,i_{max}+T_{viol})}\{\mathcal{T}^{i}(T_{encry})\}$ 
		\fontsize{10}{9}\selectfont
		\begin{algorithmic}[1]
			
			\If{status==no attack} \Comment{Start Automa}
				\If{\textbf{Detector}$(\tilde{x}(t))$==attack}  status=attack
				\EndIf
			\Else{} \label{start_encry}
				\If{status==attack} \Comment{Wait channel encryption} 
					\If{timer$<T_{encry}$} timer=timer+1;
					\Else{} 
					\State 
					\begin{varwidth}[t]{0.85\columnwidth}
					Re-initialize all modules; status==no attack; timer=0;
					\end{varwidth}
					\EndIf
				\EndIf \label{end_encry}
			\EndIf \Comment{End Automa}
			
			\If{status==no attack} 
			\State Find 
			\vspace{-0.25 cm}
			$$
			i(t)=\arg\min_i:\, \tilde{y}(t)\in \mathcal{T}^{i}(T_{encry})
			$$ 
			\vspace{-0.25 cm}
			\State 	Randomly choose the selection index $\bar{j}=j(t)$;
			\If{$i(t)==0$}  $u^c(t)=f^0_{\bar{j}}(\tilde{y}(t))$
			\Else{}
			\State 
			\begin{varwidth}[t]{0.80\columnwidth}
				Compute  $u^c(t)$ by solving  (\ref{new_opt_k_steps_fun})-(\ref{new_opt_k_steps_constr}) with cost function $J_{\bar{j}}(\tilde{y}(t),u)$;
			\end{varwidth}
			\EndIf
			
			\State Send $u^c(t)$ to the actuators;
			\Else{}
				\State Interrupt all the communications
				\State Reestablish  encrypted communication channels 
			\EndIf
		\State $t\leftarrow t+1$, goto Step 1
		\end{algorithmic}
		\hrulefill
	}} 

\begin{remark}\label{remark:stima_set_level_attuatore}
It is important to underline that  \textbf{Pre-Check} and \textbf{Post-Check} modules need  the current set-level $i(t). $ Unfortunately,  this information   cannot be transmitted  because it could be modified  by some attackers. To overcome such a difficulty, the  estimate $\hat{i}(t)$ provided by the \textbf{Actuator} unit is used. Note that  $i(t)$ and  $\hat{i}(t)$  are synchronized  at  the initial ($t=0$)   and at each recovery phase time instants, while in all the other situations it is ensured that such signals are compatible, i.e. $\hat{i}(t)\geq  i(t), \, \forall t \geq 0.$
\hfill $\Box$
\end{remark}
\begin{theorem}\label{teorema:proof_P1_P2_problem}
\it{
Let  $\{\Xi^i(T_{encry})\}_{i=1}^{N},$ $\{\mathcal{U}^i(T_{encry})\}_{i=1}^{N},$ $\{\mathcal{T}^i(T_{encry})\}_{i=1}^{N},$ be non empty controllable set sequences and 
$$
x(0)\in \bigcup_{i=0}^{max(N,i_{max}+T_{viol})}\{\mathcal{T}^{i}(T_{encry})\}
$$
Then, the proposed set-theoretic control architecture (\textbf{$\tau$-ST-RHC}  Controller, \textbf{Detector}, \textbf{Pre-Check} and \textbf{Post-Check}) always guarantees constraints  satisfaction   and Uniformly Ultimate Boundedness    for all admissible attack scenarios  and disturbance/noise realizations.  
}
\end{theorem}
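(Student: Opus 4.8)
The plan is to prove the claim by a discrete-time invariance (induction) argument establishing that the closed-loop state never leaves the domain of attraction $\bigcup_{i=0}^{N}\mathcal{T}^i(T_{encry})$, combined with a case analysis that reduces every admissible attack to one of the situations already settled by Propositions \ref{no-stealthy-attack-identification}--\ref{free_evolution_bounded_stability}. First I would note that the feasibility start condition, together with the inequality $i_{max}+T_{viol}\leq N$ it encodes, places $x(0)$ inside $\bigcup_{i=0}^{N}\mathcal{T}^i(T_{encry})\subseteq\mathcal{X}$; hence constraint satisfaction holds at $t=0$ and the inductive hypothesis ``$x(t)\in\bigcup_{i=0}^{N}\mathcal{T}^i(T_{encry})$ and $\hat{i}(t)\geq i(t)$'' is well posed, the latter compatibility being guaranteed by Remark \ref{remark:stima_set_level_attuatore}.

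Next I would dispatch the detection layer. By Propositions \ref{no-stealthy-attack-identification} and \ref{set-theoretic-watermarking}, every admissible attack that perturbs the one-step prediction outside the expected set $\mathbf{Y}^+$ violates (\ref{eq:condition_attack}) and is therefore flagged by the Detector (\ref{detector_module}); the randomization of the cost functional and terminal law (Proposition \ref{set-theoretic-watermarking}) closes the sole loophole, namely the full-model-knowledge stealthy attack. Conversely, any attack that keeps $\tilde{y}(t+1)\in\mathbf{Y}^+$ must, by the second scenario in the proof of Proposition \ref{no-stealthy-attack-identification}, inject only perturbations absorbable into the admissible disturbance sets, $Bu^a+B_dd_x\in\mathcal{D}_x$ and $x^a+d_y\in\mathcal{D}_y$, and hence cannot drive the state beyond the robust margins already built into the $\tau$-ST-RHC construction. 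This reduces the analysis at every step to two alternatives: either no harmful attack acts, or a harmful attack acts and is detected.

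I would then carry out the inductive step along these two branches. In the attack-free branch I invoke Proposition \ref{property:control_action_feasible_k_steps}: solving (\ref{new_opt_k_steps_fun})--(\ref{new_opt_k_steps_constr}) returns an admissible $u^c(t)\in\mathcal{U}^{i(t)}(T_{encry})$ steering $x(t+1)\in\mathcal{T}^{i(t)-1}(T_{encry})$ for all disturbances, so the set index is non-increasing and the trajectory descends the hierarchy toward the RPI set $\mathcal{T}^0$, yielding UUB in $S:=\mathcal{T}^0$. In the detected-attack branch I track the recovery automaton of the on-line algorithm: communications are interrupted and re-encrypted within $T_{encry}$ instants (Assumption \ref{ass1}), during which the actuator either replays the last valid command $u_{-1}$ --- which, computed as a $T_{encry}$-step-ahead feasible action, keeps the state confined by Proposition \ref{property:control_action_feasible_k_steps} --- or, should the Post-Check expose a harmful command that bypassed the Pre-Check, switches to $u\equiv 0$, whereupon Proposition \ref{free_evolution_bounded_stability} together with the design (\ref{safe_free_evolution}) of $i_{max}$ confines the zero-input open-loop evolution to $\bigcup_{i=1}^{N}\mathcal{T}^i(T_{encry})$. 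The conservatism $\hat{i}(t)\geq i(t)$ ensures the Pre/Post-Check thresholds never misclassify a harmful command as admissible.

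The main obstacle is the worst-case timing bookkeeping that closes the induction: I must verify that the least favourable event chain --- attack detected, $T_{encry}$ steps of frozen or zero input, then a fresh attack launched at the earliest instant permitted by Assumption \ref{ass2} --- never pushes the state past $\bigcup_{i=1}^{N}\mathcal{T}^i(T_{encry})$. This is precisely what the upper index $\min(N,i_{max}+T_{viol})$ in (\ref{safe_free_evolution}) is engineered to absorb: since $T_{viol}\geq T_{encry}$, each recovery certifies the channel secure for at least one full $T_{encry}$ window, so the $\tau$-ST-RHC controller regains authority and reinstates the descending-index behaviour before the reachable-set bound is exhausted. Chaining these windows establishes invariance of $\bigcup_{i=0}^{N}\mathcal{T}^i(T_{encry})$ for all $t$, which gives constraint satisfaction (each $\mathcal{T}^i(T_{encry})\subseteq\mathcal{X}$, each $\mathcal{U}^i(T_{encry})\subseteq\mathcal{U}$) and, on every attack-free sub-interval, convergence into $\mathcal{T}^0$, i.e. UUB, completing the argument.
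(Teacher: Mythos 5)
Your proposal is correct and follows essentially the same route as the paper's (much terser) proof: both reduce the claim to (i) feasibility and confinement of the state in $\bigcup_{i=0}^{N}\mathcal{T}^i(T_{encry})$ via Proposition \ref{property:control_action_feasible_k_steps} in the attack-free branch, and (ii) the Pre-Check/Post-Check recovery logic, with the zero-input fallback certified by Proposition \ref{free_evolution_bounded_stability} handling the worst case of a malicious command that slips past the Pre-Check. One sentence is misworded --- the conservatism $\hat{i}(t)\geq i(t)$ enlarges the admissible sets used by the checks, so it guarantees that legitimate commands are never falsely rejected, not that harmful commands are never classified as admissible (they can pass the Pre-Check, which is precisely why your Post-Check/zero-input branch exists and already covers that case) --- but this slip is not load-bearing and does not affect the validity of your argument.
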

\begin{proof}
%
The proof straightforwardly  follows by ensuring that under any admissible attack scenario the following requirements hold true:
\begin{itemize}
\item the on-line optimization problem (\ref{new_opt_k_steps_fun})-(\ref{new_opt_k_steps_constr}) is  feasible and  the state trajectory $x(t)$ is confined to
$\bigcup_{i=0}^{N}\{\mathcal{T}^{i}(T_{encry})\};$

\item any attack free scenario can be recovered in at most  $T_{encry}$ time instants. 
\end{itemize}
As shown in Section \ref{pre-post-check}, the worst case scenario arises when the attacker can successful inject a malicious input that simulates  a stealthy attack. 
First, in virtue of the actions of  the \textbf{Pre-Check} and \textbf{Actuator} modules, the input  constraints $u(t)\in \mathcal{U}$ are always fulfilled. 
Then,  the \textbf{Post-Check} module ensures that, whenever the state trajectory diverges within the $T_{encry}$-steps ahead controllable set sequence, a recovery procedure starts and an admissible zero-input state evolution takes place, see Proposition \ref{free_evolution_bounded_stability}.

\end{proof}

\section{NUMERICAL EXAMPLE}

We consider the continuous-time model  \cite{BlMI00}
$$
\left[
\begin{array}{c}
\dot{x}_1(t)\\\dot{x}_2(t) 
\end{array}
\right]\! =\!
\left[
\begin{array}{c c}
1 & 4\\ 0.8 & 0.5
\end{array}
\right] 
\left[
\begin{array}{c}
x_1(t)\\x_2(t) 
\end{array}
\right] 
\!\!+\!\!
\left[
\begin{array}{c}
0\\1 
\end{array}
\right] 
u(t)
\!\!+\!\!
\left[
\begin{array}{c}
1\\1 
\end{array}
\right] 
d_x(t)
$$
subject to
$$
	|u(t)| \leq 5, |x_1(t)| \leq 2.5,  |x_2(t)| \leq 10, |d_x (t)|\leq 0.05
$$
The continuous time system  has been discretized by means of Forward Euler-method with sampling time $T_s = 0.02\,sec.$ According to \textit{Assumption} \ref{ass1}-\ref{ass3}, we consider a reliable encrypted communication medium where $T_{encry}=4$ time steps ($ 0.08sec$) and  $T_{viol}=5$  time steps ($0.1sec$).

\noindent First, 
the following polyhedral families of $T_{encry}-$steps controllable sets are computed (see Fig. \ref{fig:sets_and_trajectory}):
$$
\left\{\Xi^{i}\right\}_{i=0}^{60}(T_{encry}),\,\,\,\left\{\mathcal{U}^{i}\right\}_{i=0}^{60}(T_{encry}),\,\,\,\,\left\{\mathcal{T}^{i}\right\}_{i=0}^{60}(T_{encry})
$$ 
and the maximum safe index set $i_{max}=45$ has been determined. 

\begin{figure}
\centering
\includegraphics[width=0.6\linewidth]{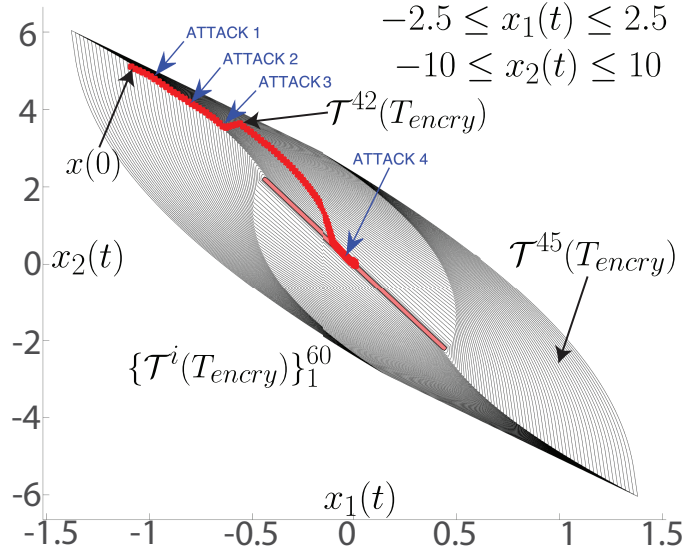}
\caption{$\left\{\mathcal{T}^{i}\right\}_{i=0}^{60}(T_{encry})$ family (black polyhedra) and state trajectory (red solid line). Blue arrows point to the  current system state vector at the beginning of each attack scenario. }
\label{fig:sets_and_trajectory}
\end{figure}

The following simulation scenario is considered:\\
{\it{Starting from the initial condition   $x(0)=[-1.09,\,5.11]^T\in \mathcal{T}^{45}(T_{encry}),$ regulate the state trajectory to zero 
regardless of any admissible attack  and disturbance realization and satisfy the prescribed constraints.}} 

\noindent In the sequel, the following sequence of attacks is considered:
\begin{itemize}
	\item Partial model knowledge attacks - (Attack 1) DoS attack on the \textit{controller-to-actuator} channel; (Attack 2) DoS attack on the \textit{sensor-to-controller} channel; (Attack 3) FDI attack on the \textit{controller-to-actuator} channel.

	\item FDI Full model knowledge attacks - (Attack 4). By following the \textbf{Stealthy Attack Algorithm}  of Section \ref{section:attack_full_model_knoledge}, 
	the attacker, 
	 tries to impose   the  malicious control action 
	$$
	\begin{array}{c}
	\tilde{u}(t)=\displaystyle \arg \max_u |Ax+Bu|,\,\,\,\,s.t. \\
	Ax+Bu\in \tilde{\mathcal{T}}^0,\,\,\,u\in \mathcal{U}^0\\
	\end{array}
	$$
	with the aim  to keep the state trajectory  as far as possible from the equilibrium and to avoid the \textbf{Post-Check} detection  by embedding the dynamical plant behavior within the terminal region.
\end{itemize}
%
\begin{figure}
\centering
\includegraphics[width=1\linewidth]{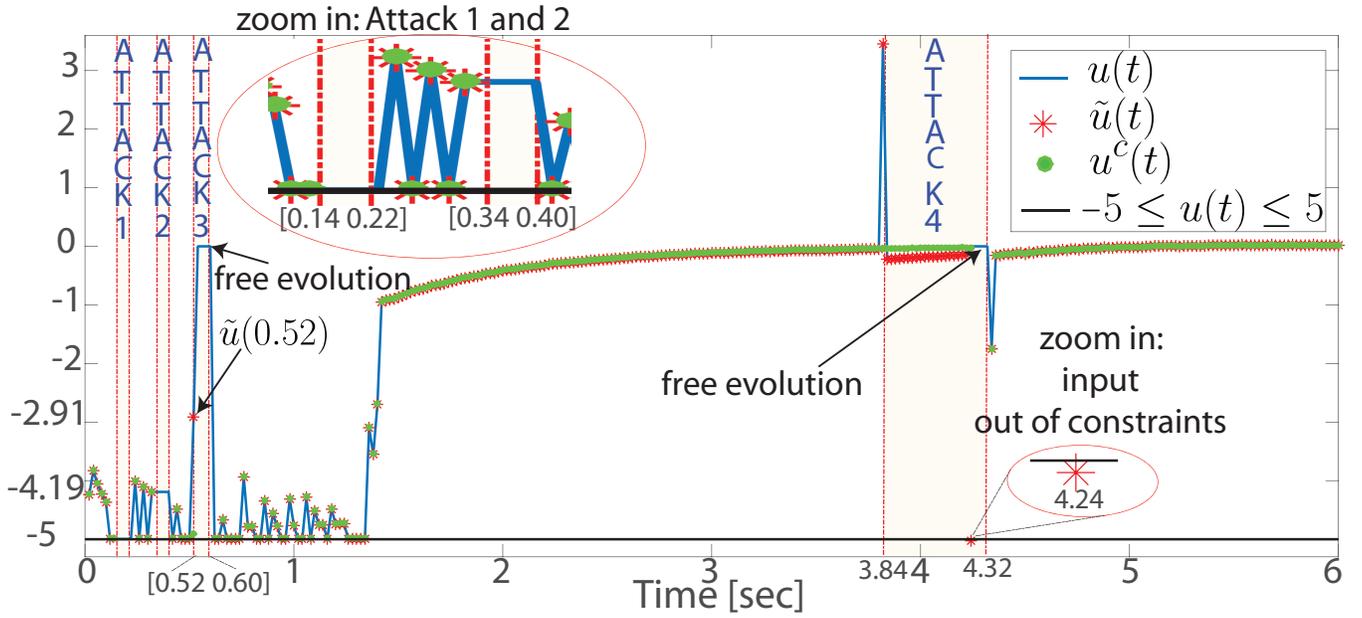}
\caption{Command inputs: actuator output $u(t),$ corrupted control action $\tilde{u}(t),$ controller command $u^c(t).$  }
\label{fig:inputs_and_attacks}
\end{figure}
\begin{figure}
\centering
\includegraphics[width=1\linewidth]{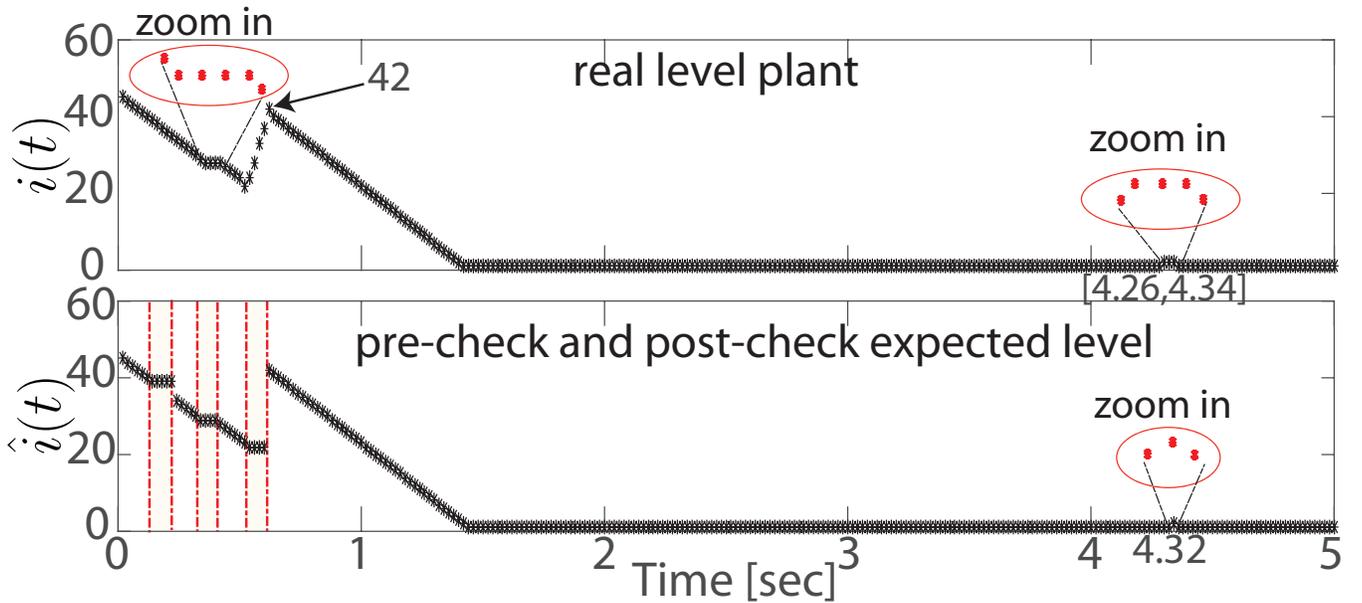}
\caption{State set-membership levels: real plant level $i(t)$ (top figure),   Pre-Check and Post-Check estimate level $\hat{i}(t)$ (bottom figure). }
\label{fig:set_levels}
\end{figure}
\begin{figure}
	\centering
	\includegraphics[width=1\linewidth]{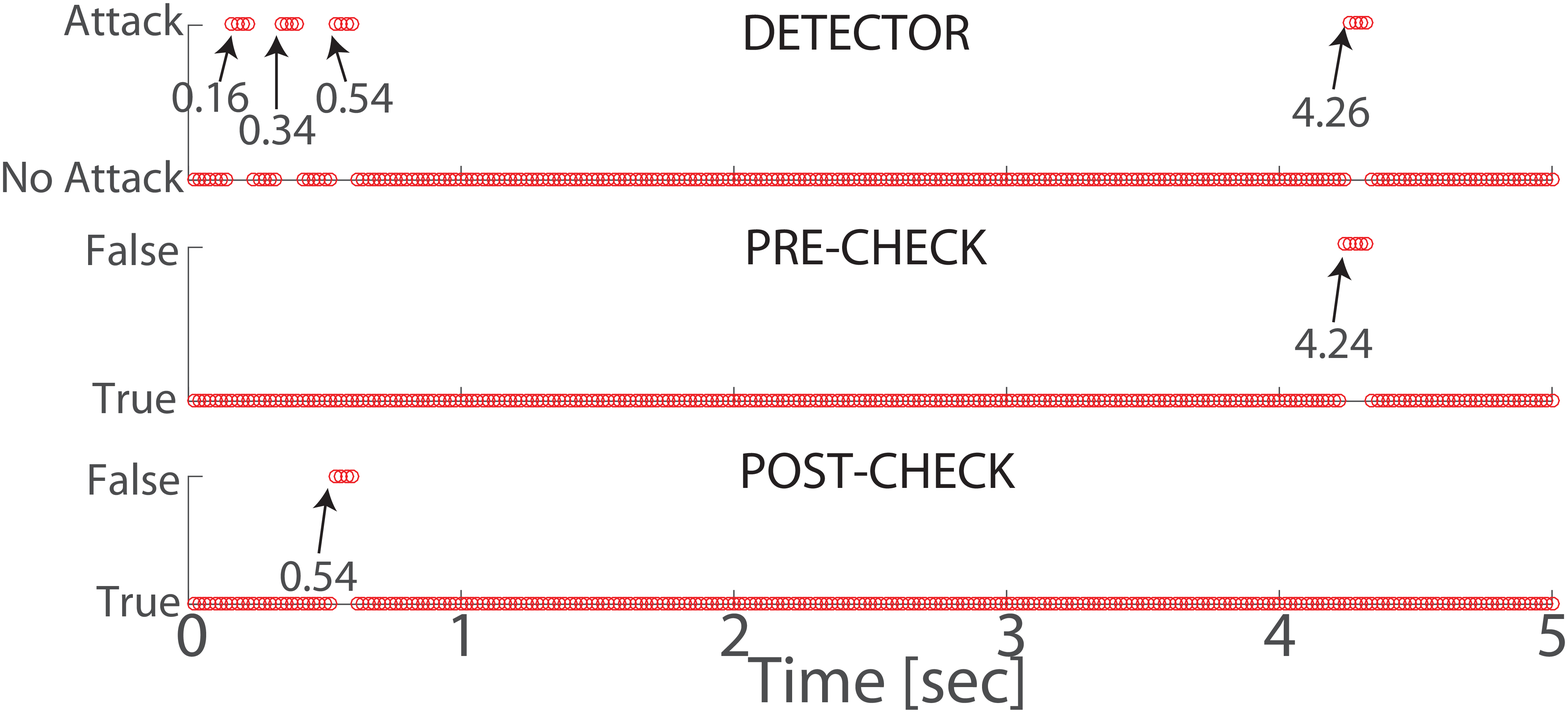}
	\caption{Detector, Pre-Check and Post-Check flag signals.}
	\label{fig:status_new}
\end{figure}
%
First, it is interesting to underline that the state trajectory is confined within $\left\{\mathcal{T}^{i}\right\}_{i=0}^{60}(T_{encry})$ and  asymptotically converges to the origin  when an attack free scenario is recovered  ($t= 4.32 sec$). 

\noindent-({\bf{Attack 1}}) Starting from  $t=0.14 sec,$  
the actuator do  not receive new packets. According to the \textbf{Actuators} algorithm (Step \ref{previous_command}) the most recent available command ($u(t)=u^c(0.12)=4.95$) can be  applied since both Pre-Check and Post-Check conditions are satisfied. At $t=0.16 sec,$ the Detector identifies the attack (see Fig. \ref{fig:status_new}) because 
%
$$
\tilde{y}(0.16)\!\!\notin\!\! \mathcal{Y}^+\!\!\!\!=\!\{\!z\in {\rm R}^n\!:\! \exists d\in \mathcal{D}, z\!=\!A\tilde{x}(0.14)\!+\!Bu^c(0.14)\!+\!B_d d\}
$$
As prescribed  in   Steps \ref{start_encry}-\ref{end_encry} of the \textbf{$\tau$-ST-RHC} algorithm,  the existing communications are interrupted and  the procedure to reestablish new encrypted channels  started. At $t=0.24 sec,$ the encryption procedure ends and all the modules  re-initialized. It is worth to notice that  neither Pre-Check or Post-Check modules trigger a False Input or False output events. This is due to the fact that the most recent  control action was not corrupted and,  by construction, it ensures that the state trajectory remains confined within the current controllable set   for the successive  $T_{encry}$ time instants, 
see Fig. \ref{fig:set_levels}.

\noindent-({\bf{Attack 2}})  Starting from  $t=0.34 sec$ 
the Controller does not receive update state measurements and   the {\bf{Detector}}  triggers an attack event. As a consequence,  the network is disconnected and the actuator does not receive new control actions and the  available command $u(t)=u^c(0.32)=-4.19$ is applied, see Fig. \ref{fig:inputs_and_attacks}. 
 At  $t=0.22 \,sec,$ the attack free scenario is recovered.

\noindent-({\bf{Attack 3}}) At $t=0.52 sec$ 
the attacker injects a signal $u^a(0.52)=2$ that corrupts 
the current input $u^c(0.52)=-4.91$ as indicated in (\ref{acutator_attack_model}). Therefore, the actuator receives  $\tilde{u}(0.52)=-2.91$ that is still admissible as 
testified by  the {\bf{Pre-Check}} unit. The main consequence of the latter is that  $x(0.54) \in \mathcal{T}^{20}$ while the expected set-membership condition should 
have to be  $\mathcal{T}^{18}:$    the Post-Check module and the Detector trigger an attack event and the Controller blocks all the communications. 
From now on, the {\bf{Actuator}} logic imposes a zero-input state evolution, 
see Step \ref{free-evol}. Although during the channel encryption phase (the encryption procedure ends at $t=0.60 sec$) the  set-membership index increases (see Fig.\ref{fig:set_levels}), this does not compromise feasibility retention because $x(0.52)\in \{\mathcal{T}^i(T_{encry})\}_{i=0}^{imax}$ and the zero-input state evolution
will remain confined within the domain of attraction, i.e. $x(0.60)=[-0.58,\, 3.61]^T\in \mathcal{T}^{42}(T_{encry}),$ see Fig. \ref{fig:set_levels}.
	
\noindent-({\bf{Attack 4}}) At $t=3.84 sec,$  with $x(3.83)\in \mathcal{T}^0(T_{encry}),$ an  FDI attack corrupts both the communication channels, see Fig. \ref{fig:sets_and_trajectory},\ref{fig:set_levels}. \\
The attacker is capable to remain stealthy until $t=4.24\,sec$ and to manipulate the plant input and outputs.  This unfavorable phenomenon is due to the fact that
it is not possible   to discriminate between the attack and the disturbance/noise realizations $d_x(t)$ and $d_y(t),$  i.e.  $\tilde{y}(t)\in \mathcal{Y}^+, \forall t\in [3.84,\, 4.22]sec.$\\
%
%
At $t=4.24\,sec,$ a different behavior occurs in response to  $\tilde{u}(4.24)=-5.025\notin \mathcal{U}^0$  with the Pre-Check module that triggers an anomalous event arising when the attacker  
tries to impose  $\tilde{u}(4.24)=-4.993$  as the current input.
Specifically, the attacker determines the  estimate $\hat{u}^c(4.24)$ and modifies the control action as follows $\hat{u}^c(4.24)+u^a(4.24)=\tilde{u}(4.24).$ Since a time-varying index is exploited in (\ref{new_opt_k_steps_fun})-(\ref{new_opt_k_steps_constr}), the  estimate  $\hat{u}^c(4.24)=-0.032$ is numerically different from the effective control action $u^c(4.24)=-0.059.$ Therefore  $\tilde{u}(4.24) \notin \mathcal{U}^0$ and, as a consequence, the attack is detected. 
For the interested reader, simulation demos are available at the following two web links: \textbf{J fixed}:  \url{https://goo.gl/8CQ4b8},  \textbf{J random}: \url{https://goo.gl/DQhOxB}


\section{Conclusions}
In this paper a control architecture devoted to  detect and mitigate cyber attacks affecting networked constrained CPS is presented. The resulting control scheme, which takes mainly  advantage from  set-theoretic concepts, provides a formal and robust cyber-physical approach against the DoS and FDI   attack classes.  Constraints satisfaction and Uniformly Ultimate Boundndeness  are formally proved  regardless of any admissible attack scenario occurrence.  Finally, the simulation section allows to show the effectiveness and applicability of the proposed strategy under severe cyber attack scenarios.

\end{document}